\pgfplotsset{width=9cm, height=6.8cm, compat=1.9}
\let\NAT@parse\undefined
\newtheorem{theorem}{Theorem}
\pgfplotsset{every axis legend/.style={%
cells={anchor=west},
inner xsep=3pt,inner ysep=2pt,nodes={inner sep=0.8pt,text depth=0.15em},
anchor=north east,%
shape=rectangle,%
fill=white,%
draw=black,
at={(0.98,0.98)},
font=\footnotesize,
}}
\pgfplotsset{every axis/.append style={line width=0.6pt,tick style={line width=0.8pt}}}
\begin{document}

\title{Less Carbon Footprint in Edge Computing by Joint Task Offloading and Energy Sharing}

\author{Zhanwei Yu\textsuperscript{1}, Yi Zhao\textsuperscript{1}, Tao Deng\textsuperscript{2}, Lei You\textsuperscript{3}, and Di Yuan\textsuperscript{1}}
\affil{\textsuperscript{1}Department of Information Technology, Uppsala University, Sweden
\authorcr \textsuperscript{2}School of Computer Science and Technology, Soochow University, China
\authorcr \textsuperscript{3}Department of Engineering Technology, Technical University of Denmark, Denmark
\authorcr {\em \textsuperscript{1}\{zhanwei.yu; yi.zhao; di.yuan\}@it.uu.se, \textsuperscript{2}dengtao@suda.edu.cn, \textsuperscript{3}lei.you@pm.me}}

\renewcommand*{\Affilfont}{\small}

\maketitle


\begin{abstract}
In sprite the state-of-the-art, significantly reducing carbon footprint (CF) in communications systems remains urgent. We address this challenge in the context of edge computing. The carbon intensity of electricity supply largely varies spatially as well as temporally. This, together with energy sharing via a battery management system (BMS), justifies the potential of CF-oriented task offloading, by redistributing the computational tasks in time and space. In this paper, we consider optimal task scheduling and offloading, as well as battery charging to minimize the total CF. We formulate this CF minimization problem as an integer linear programming model. However, we demonstrate that, via a graph-based reformulation, the problem can be cast as a minimum-cost flow problem. This finding reveals that global optimum can be admitted in polynomial time. Numerical results using real-world data show that optimization can reduce up to $83.3\%$ of the total CF.
\end{abstract}

\begin{IEEEkeywords}
carbon footprint, scheduling, edge computing.
\end{IEEEkeywords}


\section{Introduction}
The current carbon footprint (CF) in information and communications technology (ICT) sector could be as high as $2.1\%$ to $3.9\%$ of the total figure \cite{Freitag2021Theclimate}. For Europe, for example, the baseline is $2\%$\cite{EuropeanCommission}. The authors of \cite{perrons2021digital} list four potential measures that can significantly help reduce the CF due to ICT, one of which is edge computing. 

The edge servers consume substantially less energy than conventional cloud data centers, though they need to be densely deployed. For green edge computing, research has been conducted for better design of networks \cite{mao2017survey}, including dynamic right-sizing, geographical load balancing, and the use of renewable energy. However, most works with respect to green edge computing consider energy consumption or efficiency as the objective function, rather than CF explicitly. 

Lower energy consumption contributes to reducing CF, however they are not equivalent. For example, in an edge computing network consisting of uniform servers and task distribution, considering only energy consumption would render task offloading useless (which itself would cost energy). However, as the CF intensity differs by time and space, offloading does help. We show a US domestic carbon intensity (CI) data set and another data set for three counties in Europe in Table \ref{tab:data} \cite{ElectricityMaps}. As we can see, there are large differences between spaces and time. For example, the CI in Oregon is five times higher than that in Washington. Also, The CI in Oregon at 24:00 is 31.7\% less than that at 08:00. The variation is due to the difference in the availability of various power sources over time and space. This motivates us to use the spatial and temporal information of CI to reduce CF by optimal task offloading and scheduling.  

\begin{table}[h]
    \caption{\label{tab:data}Two real-world data sets of carbon intensity.}
    \footnotesize
    \begin{center}
        \begin{threeparttable}[b]
            \begin{tabular}{*{5}{llrrr}}
                \toprule
                \midrule
                 \multicolumn{2}{c}{\multirow{2}{*}{\bf Region}} & \multicolumn{3}{c}{\bf Carbon intensity\tnote{1}}\\
                 \cmidrule(lr){3-5}
                \multicolumn{2}{c}{} & 08:00 & 16:00 & 24:00 \\
                \midrule
                 \multirow{3}{*}{\begin{turn}{90} USA \end{turn}} & Washington & 110 & 95  & 95\\
                 & Oregon & 605 & 579 & 413\\
                 & California & 325 & 292  & 238\\
                \midrule
                 \multirow{3}{*}{\begin{turn}{90} Europe \end{turn}} & Sweden & 24 & 26 & 25\\
                 & Germany & 375 & 285 & 381\\
                 & Poland & 593 & 573 & 547\\
                \midrule
                \bottomrule
            \end{tabular}
            
            \begin{tablenotes}
            	\footnotesize
            	\item $^1$Unit: g\ce{CO2}eq/kWh.
            \end{tablenotes}
        \end{threeparttable}
    \end{center}
\end{table}

Examining the current literature, we highlight the following contributions. The studies in \cite{van2012distributed, Rad2022Carbon, do2015proximal, aldossary2021towards, ahvar2021deca, yang2022carbon} have addressed CF in edge computing or fog computing. The authors of \cite{van2012distributed} provide a model that can estimate the CF in distributed data centers. Their results show that the total CF of a set of distributed small data centers is less than that from a big data center with equivalent compute capability. The authors of \cite{Rad2022Carbon} provide a Lyapunov-based algorithm for a distributed data center to minimize electricity cost subject to CF limit. In \cite{do2015proximal}, the authors consider minimizing the CF for video streaming in fog computing networks. In addition, the studies in \cite{aldossary2021towards} and \cite{ahvar2021deca} provide application placement methods to minimize the CF in fog computing networks. The authors of \cite{yang2022carbon} examine task scheduling policies to minimize the CF of edge computing networks via the drift-plus-penalty methodology in Lyapunov optimization. Table \ref{tab:difference} lists the features of our paper and the related works in \cite{van2012distributed, Rad2022Carbon, do2015proximal, aldossary2021towards, ahvar2021deca, yang2022carbon}. Note that the works in \cite{van2012distributed, Rad2022Carbon, do2015proximal, aldossary2021towards, ahvar2021deca} only consider optimizing either the spatial or temporal dimension instead of joint spatio-temporal optimization. 

\begin{table}[h] 
    \setlength\tabcolsep{3pt}
    \caption{\label{tab:difference}The difference among our paper and the related works.}
    \footnotesize
    \begin{center}
        \begin{threeparttable}[b]
            \begin{tabular}{*{7}{lccccc}}
                \toprule
                \midrule
                { \bf Work(s)} & \makecell[c]{  \bf Renewable\\   \bf energy} & \makecell[c]{  \bf Energy\\   \bf storage} & \makecell[c]{  \bf Spatio-temporal\\   \bf optimization} &\makecell[c]{  \bf Energy\\   \bf sharing} & \makecell[c]{  \bf Task\\   \bf offloading}\\
                \midrule
                  \cite{van2012distributed} & \checkmark & & & &\\
                  \cite{Rad2022Carbon, do2015proximal, aldossary2021towards, ahvar2021deca} & & & & & \checkmark \\ 
                  \cite{yang2022carbon} & & & \checkmark & & \checkmark \\
                   Our work & \checkmark & \checkmark & \checkmark & \checkmark& \checkmark\\
                \bottomrule
            \end{tabular}
        \end{threeparttable}
    \end{center}
\end{table}

To the best of our knowledge, the work in \cite{yang2022carbon} is the closest to ours, yet the differences are significant. The system model in our paper considers battery charging on top of task offloading and scheduling, making our scenario more comprehensive. To be more exact, our paper considers task scheduling and offloading, as well as energy sharing with battery charging to minimize the CF, utilizing temporal and spatial information of CI. The main contributions of this paper are as follows.
\begin{itemize}
    \item We consider an edge computing network with renewable energy sources and batteries, and model the CF for the resulting scenario. We consider task offloading and scheduling, as well as battery charging with energy sharing, with the objective of CF minimization. 
    \item Due to the discrete nature of task offloadling, the CF minimization problem leads to an integer linear programming (ILP) model. However, we reveal that the problem structure admits a reformulation using minimum-cost flow, implying that its global optimum can be computed in polynomial time. Thus our optimization approach is scalable. 
    \item We use real-world CI data to evaluate the performance. The numerical results show that, by optimal CF-aware task offloading and scheduling along with energy sharing, we can reduce up to $83.3\%$ of the total CF.
\end{itemize}


\section{System Model and Problem Formulation}

\input{fig-system_model}

\subsection{System Model}
Fig. \ref{fig:system_model} shows $S$ sites (illustrated by dashed rectangles) forming an edge computing network. Let $\mathcal{S} = \{1,2,...,S\}$. Each site consists of an edge computing server, a battery, a local renewable energy source, and the local power grid. The servers have the same specification and computing power. The batteries are connected to each other via a battery management system (BMS), hence the energy in a battery can be shared with the servers on the other sites\footnote{The BMS has been studies widely, see for example \cite{leithon2013online} and \cite{leithon2019task}.}. In this paper, we use the index of the site to refer to its components, e.g., server $s$, battery $s$, etc. More specifics of the system model are as follows;
\begin{itemize}
    \item A server can be powered by its local grid, its renewable energy source, and all the batteries. 
    \item A local battery can be charged by the local grid and the local renewable energy source. Additionally, the battery can provide energy to any other server. However, this comes with an energy transfer loss.
\end{itemize}

\subsection{Time Horizon}
We consider a scheduling horizon of $T$ time slots, denoted by $\mathcal{T} = \{1,2,...,T\}$. The set of tasks over the entire time horizon is represented by $\mathcal{N} = \{1, 2, ..., N\}$. In this paper, the tasks are assumed to be of the same type, i.e., the amount of energy to complete them is uniform, denoted by $E$. Without loss of generality, we use $E$ as our energy unit, and the relevant parameters (e.g., parameters $R$, $I$, $\alpha$, $\beta$, $L$, and $H$ to be introduced next) are all normalized by $E$, i.e., they are specified in multiples of $E$. 

Task $n$ is represented by a tuple $(o_n, d_n, s_n, \mathcal{S}_n)$, where $o_n, d_n \in \mathcal{T}, s_n \in \mathcal{S}, \mathcal{S}_n \subseteq \mathcal{S}$, such that
\begin{itemize}
    \item $o_n$ is the time slot when the task is generated;
    \item $d_n$ is the deadline of completing the task;
    \item $s_n$ is the server that the task is initially associated with;
    \item $\mathcal{S}_n$ represents the set of candidate servers\footnote{Some tasks might not be suitable for some servers because of, for example, latency due to the distance.} that can perform task $n$.
\end{itemize}
A task is to be completed in some time slot in the interval $[o_n, d_n]$.

Denote by $\pi_{nst}$ a binary variable that is one if and only if task $n$ is completed by server $s$ in time slot $t$. For any task, it needs to be completed once, thus we have
\begin{equation}\label{Constraint_1}
    \sum_{s \in \mathcal{S}} \sum^{d_n}_{t = o_n} \pi_{nst} = 1, \forall n \in \mathcal{N}.
\end{equation}

Due to normalization, a task requires one unit of energy. Thus in time slot $t$, the amount of energy consumed for task completion by server $s$ is $\sum_{n \in \mathcal{N}} \pi_{nst}$. Of the energy consumed, we use variables $x_{st}$, $y_{ss^\prime t}$, and $z_{st}$ to represent the amount of energy (again normalized by $E$) by server $s$ from the local grid, battery $s^\prime$, and the local renewable source, respectively, in time slot $t$. Then we have
\begin{equation}\label{Constraint_2}
     x_{st} + \sum_{s^\prime \in S}  y _{ss^\prime t} + z_{st} = \sum_{n \in \mathcal{N}} \pi_{nst}, \forall s \in \mathcal{S}, t \in \mathcal{T}.
\end{equation}
In addition, the maximum number of tasks that a server can perform in a time slot is constrained by the computing capacity, denoted by $H$. As a task amounts to one energy unit, we have the following constraint:
\begin{equation}\label{Constraint_3}
    x_{st}+\sum_{s^\prime \in\mathcal{S}} y_{ss^\prime t} +z_{st} \leq H, \forall s \in \mathcal{S}, t \in \mathcal{T}.
\end{equation}

We use variables $u_{st}$ and $v_{st}$ to represent the amount of energy from the local grid and the renewable source for charging, respectively. In addition, denote by variable $w_{st}$ the amount of remaining energy of battery $s$ at the end of time slot $t$. We use $L$ to denote the battery capacity. In time slot $t$, we have the following battery capacity constraint:
\begin{equation}\label{Constraint_4}
    u_{st} + v_{st} + w_{s(t-1)} \leq L, \forall s \in \mathcal{S}, t \in \mathcal{T} \setminus \{1\}.
\end{equation}
By the end of time slot $t$, the remaining energy $w_{st}$ is given by
\begin{equation}\label{Constraint_5}
    w_{st} = u_{st} + v_{st} + w_{s(t-1)} -\sum_{s^\prime \in \mathcal{S}} y_{ss^\prime t}, \forall s \in \mathcal{S}, t \in \mathcal{T} \setminus \{1\}.
\end{equation}

We use $R_{st}$ to denote the amount of renewable energy\footnote{Renewable energy can be predicted via for examples methods in \cite{li2019renewable}.} available from source $s$ in time slot $t$. The renewable energy is used for charging the local battery and supplying the local server, thus we have  
\begin{equation}\label{Constraint_6}
    z_{st} + v_{st} \leq R_{st}, \forall s \in \mathcal{S}, t \in \mathcal{T}.
\end{equation}

\subsection{Carbon Footprint and Mathematical Formulation} \label{subsec:CFA}

In the system model, CF occurs in three processes related to the grids. Denote by $I_{st}$ the CI of local grid $s$ in time slot $t$. The three types of CF can be calculated as follows.
\begin{enumerate}
    \item {\em Task completion}: CF will occur when a server consumes the energy from the local grid to complete some task. The total amount of CF of this can be expressed as
    \begin{equation}\label{CF_task_completion}
        \text{CF}^{\text{G}}(\boldsymbol{x}) = \sum_{t \in \mathcal{T}} \sum_{s\in \mathcal{S}} I_{st}x_{st}.
    \end{equation}
    \item {\em Battery charging}: The total amount of CF due to battery charging via the local grid can be calculated by
    \begin{equation}\label{CF_battery_charging}
        \text{CF}^{\text{B}}(\boldsymbol{u}) = \sum_{t \in \mathcal{T}} \sum_{s\in \mathcal{S}} I_{st}u_{st}.
    \end{equation}
    \item {\em Task offloading}: In this paper, we consider the worst-case CF in task offloading, i.e., we assume that the energy used for task offloading is entirely from the grid. We use $\alpha_{s_n s}$ to represent the required amount of the energy for transferring task $n$ from its initial server $s_n$ to $s$ ($\alpha_{s_n s} = 0$ if $s_n = s$). The total CF of task offloading in the network is given by
    \begin{align}\label{CF_task_offloading}
        \text{CF}^{\text{O}}(\boldsymbol{\pi}) =  \sum_{n \in \mathcal{N}} \sum_{s \in \mathcal{S}_n}  \sum^{d_n}_{t = o_n} \alpha_{s_n s} I_{st}  \pi_{nst}.
    \end{align}
\end{enumerate}
In addition, there will be some loss in energy transfer between a battery and the servers of other sites. We convert this loss into an equivalent amount of CF. Denote the loss per unit of energy from battery $s^\prime$ to server $s$ by $\beta_{s^\prime s}$ ($\beta_{s^\prime s} = 0$ if $s^\prime = s$). We assume that local grid $s$ provides energy to make up for the loss, and the total amount of the equivalent CF is thus 
\begin{equation}\label{CF_energy_loss}
    \text{CF}^{\text{L}}(\boldsymbol{y}) = \sum_{s \in \mathcal{S}}  \sum_{t \in \mathcal{T}}\sum_{s^\prime \in S_n}  \beta_{s^\prime s} I_{st} y _{ss^\prime t}.
\end{equation}

We consider minimizing the total CF in the edge computing network. The CF minimization problem can be formulated by integer linear programming (ILP) as follows,
\begin{subequations}\label{formulation}
    \begin{align}
         \mathop{\min}_{\boldsymbol{\pi} \in \{0,1\}, \atop \boldsymbol{x}, \boldsymbol{y}, \boldsymbol{z}, \boldsymbol{u}, \boldsymbol{v}, \boldsymbol{w} \geq 0} \ & \text{CF}^{\text{G}}(\boldsymbol{x}) +  \text{CF}^{\text{B}}(\boldsymbol{u}) + \text{CF}^{\text{O}}(\boldsymbol{\pi}) + \text{CF}^{\text{L}}(\boldsymbol{y}) \label{obj}\\
         \text{s.t.} & \quad \eqref{Constraint_1} \text{-} \eqref{Constraint_6},\notag
    \end{align}
\end{subequations}
where the objective function \eqref{obj} is the overall system CF.


\section{Problem Solving} 

\input{fig-networkflow}

Although the CF minimization problem is formulated as ILP in \eqref{formulation}, we will demonstrate that the global optimum can be computed in polynomial time. The idea is to construct a graph, along with entities for nodes and arcs, such that the problem maps to finding a minimum-cost flow in the graph. The graph construction, however, is non-trivial. We illustrate the concept in Fig. \ref{fig:networkflow} and detail the construction below.

\subsection{Overview}
A minimum-cost flow problem is an optimization problem of finding the cheapest possible way to route flow from supply node(s) to demand node(s) in a directed graph. In the graph, every arc has two attributes: 1) the per-unit flow cost, and 2) flow capacity. A feasible flow solution has to satisfy
\begin{enumerate}
    \item Flow balance constraint: The total incoming flow to a node plus the node's supply, if any, equals the total outgoing flow of the node plus the node's demand, if any. 
    \item Capacity constraint: The flow on every arc does not exceed the arc's capacity.
\end{enumerate}

In our graph shown in Fig \ref{fig:networkflow}, the green and yellow nodes are nodes with supply and demand, respectively; the other nodes are all transshipment nodes with no supply nor demand. In the graph, green $\theta$-nodes and node $\delta$, red $(\kappa, \rho)$ node pairs, blue $(\lambda, \varepsilon)$ node pairs, and yellow $\tau$-nodes represent the renewable sources, the merge of all the local power grids, batteries, servers, and tasks, respectively. The flows in the graph represent the amount of energy, and the per-unit cost of an arc is used for accounting for the CF.

Note that the per-unit costs and capacities of arcs default to zero and infinite, respectively, if we do not specify them. Also, we use $(a\rightarrow b)$ to represent the arc from node $a$ to node $b$. As can be seen, the graph consist in a number of sections, and in the following we detail each of them.

\subsection{The Entities in a Time Slot} 

There are $T$ sections indexed by time slot $t$ $(t\in \mathcal{T})$ in the graph. For each of them, there are three types of entities, namely the $\theta$-nodes, $(\kappa, \rho)$ node pairs, and $(\lambda, \varepsilon)$ node pairs.
\begin{enumerate}
    \item Green node $\theta^{t}_{s}$ represents renewable source $s$ in time slot $t$. Its supply in the graph is set to be $R_{st}$, i.e., the amount of energy available from source $s$ in time slot $t$. The meaning of the flows on arcs $(\theta^{t}_{s}\rightarrow \lambda^{t}_{s})$ and $(\theta^{t}_{s}\rightarrow \kappa^{t}_{s})$ are the same as those of the variables $z_{st}$ and $v_{st}$, respectively. By graph construction, the two flows adhere to constraint \eqref{Constraint_6}. Note that node $\theta^{t}_{s}$ also connects to the yellow demand node $\mu$ used for receiving the surplus flows (i.e., surplus energy) in the graph to guarantee the feasibility of the minimum-cost flow problem.
    \item A red node pair $(\kappa^{t}_{s}, \rho^{t}_{s})$ represents battery $s$ in time slot $t$. The capacity of arc $(\kappa^{t}_{s}\rightarrow \rho^{t}_{s})$ is set to be the capacity of battery $s$, i.e., $L$. The flows on arcs $(\delta \rightarrow \kappa^{t}_{s})$, $(\theta^{t}_{s} \rightarrow \kappa^{t}_{s})$, and $(\rho^{t-1}_{s}\rightarrow \kappa^{t}_{s})$ correspond to variables $u_{st}$, $v_{st}$, and $w_{s(t-1)}$, respectively, and they satisfy constraint \eqref{Constraint_4} due to flow balance.
    \item A blue node pair $(\lambda^{t}_{s}, \varepsilon^{t}_{s})$ represents server $s$ in time slot $t$. The flows on arcs $(\delta \rightarrow \lambda^{t}_{s})$, $(\rho^{t}_{s^\prime} \rightarrow \lambda^{t}_{s})$, and $(\theta^{t}_{s} \rightarrow \lambda^{t}_{s} )$ represent variables $x_{st}$, $y_{ss^\prime t}$, and $z_{st}$, respectively. The flows submit to constraint \eqref{Constraint_2} by flow balance. In addition, the capacity of arc $(\lambda^{t}_{s}\rightarrow\varepsilon^{t}_{s})$ is set to be the server capacity $H$, to model constraint \eqref{Constraint_3}. Note that the flow through arc $(\rho^{t}_{s^\prime}\rightarrow\lambda^{t}_{s})$ represents that battery $s^\prime$ provides energy to server $s$ in time slot $t$, therefore the per-unit cost of the arc is set to be $\beta_{s^\prime s} I_{st}$, and it accounts for the energy loss in task offloading. Clearly, the total flow cost of these arcs equals \eqref{CF_energy_loss}.
\end{enumerate}

\subsection{Battery Level Evolution}
Between any two neighboring time slots $t$ and $t+1$ in the graph, there are $S$ arcs, i.e., $(\rho^{t}_{s}\rightarrow\kappa^{t+1}_{s})$, $\forall s\in \mathcal{S}$. The arcs model the evolution of the battery energy between time slots $t$ and $t+1$. The flow on arc $(\rho^{t}_{s}\rightarrow \kappa^{t+1}_{s})$ is the remaining energy in battery $s$ at the end of $t$. By flow balance, the flow on arc $(\rho^{t}_{s}\rightarrow \kappa^{t+1}_{s})$ equals that the flow entering node $\rho^{t}_{s}$ minus the flows representing energy sharing on arcs $(\rho^{t}_{s}\rightarrow \lambda^{t}_{s^\prime})$, $\forall s^\prime \in \mathcal{S}$, and this flow balance implies constraint \eqref{Constraint_5}.

\subsection{The Grid}

The local power grids are merged into a green node $\delta$, and we identify the individual local grids via the arcs to the nodes corresponding to batteries and servers:
\begin{enumerate}
    \item Flow can be sent through arc $(\delta \rightarrow\kappa^{t}_{s})$, and this represents that local grid $s$ charges battery $s$ in time slot $t$. Clearly, its per-unit cost is $I_{st}$ to account for CF in battery charging, and the total flow cost of these arcs equals \eqref{CF_battery_charging}.
    \item We use the flow through arc $(\delta \rightarrow \lambda^{t}_{s})$ to represent that local grid $s$ provides energy to server $s$ in time slot $t$, and the unit flow cost of arc $(\delta \rightarrow \lambda^{t}_{s})$ is set to be $I_{st}$. The sum of flow costs of these arcs is equivalent to \eqref{CF_task_completion}.
\end{enumerate}
Arc $(\delta \rightarrow \mu)$ is used for routing the surplus flows from the grid. The flow on the arc represents the amount of grid energy not used by the edge computing network. The amount of supply of node $\delta$ is set to be a sufficiently large value, e.g., $N$. 

\subsection{The Tasks}

In the task section of the graph, a yellow demand node $\tau_n$ represents task $n$. The demand of node $\tau_n$ is one. The arcs between the $\varepsilon$-nodes and $\tau$-nodes are introduced based on the information of the tasks. Specifically, the presence of arc $(\varepsilon^{t}_{s}\rightarrow\tau_{n})$ represents that server $s$ can complete task $n$ in time slot $t$ ($o_n \leq t \leq d_n$). Thus, the flow on this arc represents variable $\pi_{nst}$, and the flow balance constraint for the task node is equivalent to constraint \eqref{Constraint_1}.
The per-unit cost of arc $(\varepsilon^{t}_{s}\rightarrow\tau_n)$ is set to be $\alpha_{s_{n}s}I_{s_{n}t}$, and the sum of flow costs on these arcs is equivalent to \eqref{CF_task_offloading}.

\subsection{Integrality and Complexity}

With the graph constructed, the minimum-cost flow problem can be
solved to the optimum by the network simplex
algorithm\cite{Ahuja1993NetworkFlows}, and we can obtain the optimal
solution to \eqref{formulation} by tracing the corresponding flows.
Note that the potential difficulty of \eqref{formulation} is that
$\boldsymbol{\pi}$ is binary. Although the flows are not required to
be integer, the integrality theorem of minimum-cost flow problems
guarantees that there always exists an integer optimal solution
\cite[Theorem 11.5]{Ahuja1993NetworkFlows}.  Given an integer optimal
solution, for each task node, only one unit of flow will be routed
from some $\varepsilon$-node as the demand of a task node is one,
i.e., a task will not be split between multiple servers at optimum.

In addition, we have the following theorem with respect to the
complexity of our optimization approach.

\begin{theorem}
CF minimization problem \eqref{formulation} can be solved in polynomial time.
\end{theorem}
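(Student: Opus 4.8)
The plan is to combine the graph construction of the preceding subsections with two classical facts about minimum-cost flow: it is solvable in (strongly) polynomial time, and it admits an integer optimal solution whenever all supplies, demands, and capacities are integral. First I would check that the constructed graph $G$ has size polynomial in the input. For each of the $T$ time slots there are $S$ green $\theta$-nodes, $S$ red pairs $(\kappa,\rho)$, and $S$ blue pairs $(\lambda,\varepsilon)$; adding the single grid node $\delta$, the single sink $\mu$, and the $N$ task nodes $\tau_n$ gives $|V|=O(ST+N)$. The arcs within a time slot number $O(S^2)$ (because of the energy-sharing arcs $(\rho^{t}_{s'}\rightarrow\lambda^{t}_{s})$), the battery-evolution arcs are $O(S)$ per consecutive pair of slots, the grid and surplus arcs are $O(ST)$, and the $\varepsilon$-to-$\tau$ arcs are at most $O(NST)$, so $|A|=O(S^2T+NST)$. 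All capacities ($L$, $H$, $1$) and all supplies ($R_{st}$ at $\theta^{t}_{s}$, a value such as $N$ at $\delta$) are integral after normalization by $E$, and each per-unit cost is one of $I_{st}$, $\alpha_{s_ns}I_{s_nt}$, $\beta_{s's}I_{st}$, computable in constant time. Hence $G$ is built in polynomial time.

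Next I would invoke the arc-by-arc equivalence already argued in the preceding subsections: there is a cost-preserving correspondence between feasible flows in $G$ and feasible points $(\boldsymbol\pi,\boldsymbol x,\boldsymbol y,\boldsymbol z,\boldsymbol u,\boldsymbol v,\boldsymbol w)$ of \eqref{formulation}. The flow-balance equations at the $(\kappa,\rho)$, $(\lambda,\varepsilon)$, and $\tau$ nodes reproduce \eqref{Constraint_1}, \eqref{Constraint_2}, \eqref{Constraint_4}, \eqref{Constraint_5}; the capacities on $(\kappa^{t}_{s}\rightarrow\rho^{t}_{s})$ and $(\lambda^{t}_{s}\rightarrow\varepsilon^{t}_{s})$ give the battery bound and \eqref{Constraint_3}; the supply at $\theta^{t}_{s}$ together with the arc $(\theta^{t}_{s}\rightarrow\mu)$ gives \eqref{Constraint_6}; and the total arc cost equals $\text{CF}^{\text{G}}(\boldsymbol{x})+\text{CF}^{\text{B}}(\boldsymbol{u})+\text{CF}^{\text{O}}(\boldsymbol{\pi})+\text{CF}^{\text{L}}(\boldsymbol{y})$. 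The arcs into $\mu$ and the oversized supply at $\delta$ carry zero cost and merely absorb unused energy, so they do not alter the optimal value; their purpose is only to make the flow problem feasible. Therefore the optimum of the min-cost flow problem in $G$ equals the optimum of \eqref{formulation}.

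Finally I would run a polynomial-time min-cost flow algorithm, e.g. the network simplex method \cite[Ch.~11]{Ahuja1993NetworkFlows} in its polynomially bounded variant (or any strongly polynomial algorithm), to obtain an optimal flow. Since all supplies, demands, and capacities are integral, the integrality theorem \cite[Theorem~11.5]{Ahuja1993NetworkFlows} lets us take this flow integral; each task node has demand exactly $1$, so it is served by one unit along a single incoming $(\varepsilon^{t}_{s}\rightarrow\tau_{n})$ arc, making the induced $\boldsymbol\pi$ binary and feasible for \eqref{formulation}, and reading off the remaining flows yields the other variables. By the cost-preserving correspondence this point is optimal for \eqref{formulation}. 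As graph construction, the flow computation, and the back-translation are each polynomial, the theorem follows. The step I expect to be the genuine obstacle—already handled by the earlier subsections—is the careful bookkeeping that the flow-balance and capacity constraints of $G$ are \emph{exactly} \eqref{Constraint_1}--\eqref{Constraint_6}, with the auxiliary sink $\mu$ and the enlarged supply at $\delta$ neither excluding any feasible ILP point nor introducing a spuriously cheaper flow; once that is pinned down, polynomial solvability and integrality are off-the-shelf.
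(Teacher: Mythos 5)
Your proposal is correct and follows essentially the same route as the paper: polynomial-size graph construction, the cost-preserving correspondence with \eqref{formulation} established in the preceding subsections, the integrality theorem of minimum-cost flow to recover a binary $\boldsymbol{\pi}$, and a polynomial-time flow algorithm. The only difference is cosmetic: the paper cites the explicit bound for network simplex with dynamic trees, $\mathcal{O}((S^3T + NS^2T^2 + N^2ST)\log(ST+N)\log(STC+NC))$, whereas you appeal generically to any (strongly) polynomial min-cost flow algorithm, and your arc count $\mathcal{O}(S^2T + NST)$ is in fact the more careful one.
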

\begin{proof}
The graph has in total $\mathcal{O}(ST + N)$ nodes and $\mathcal{O}(S^2 + NST)$ arcs, hence the graph constructing can be completed in polynomial time. The complexity of the network simplex algorithm using dynamic trees \cite{tarjan1997dynamic} for solving the problem of the graph is $\mathcal{O}((S^3T + NS^2 T^2+N^2ST)\log(ST + N)\log(STC + NC))$, where $C$ is the maximum cost of any arcs. Hence the conclusion.
\end{proof}


\section{Performance Evaluation}

In this section, we use a 24-hour CI data set of Sweden, Germany, and Poland from \cite{ElectricityMaps}; part of this data set is shown in Table \ref{tab:data}, for performance evaluation. In our simulation, the length of a time slot is an hour. The amount of renewable energy follows a binomial distribution $B(5, 0.5)$ \cite{coskun2011estimation} during daytime (7 am to 7 pm), and is zero otherwise. For any task $n$, $o_{n}$, $d_n$ ($o_n \leq d_n$), and $s_n$ all follow a discrete uniform distribution. In addition, we set $\mathcal{S}_n = \mathcal{S}$ $\forall n \in \mathcal{N}$, i.e., a task can be offloaded to any server. The other simulation parameters are specified in Table \ref{tab:parameters}. Note that the values of CF in our results are all normalized by $E$. 

\begin{table}[h]
    \caption{\label{tab:parameters}Simulation Parameters.}
    \footnotesize
    \begin{center}
        \begin{threeparttable}[b]
            \begin{tabular}{*{2}{lr}}
                \toprule
                \midrule
                {\bf Parameter} & {\bf Value}\\
                \midrule
                 The number of time slots ($T$) & $24$ \\
                 The number of servers ($S$) & $3$\\
                 The number of tasks ($N$) & $100$\\
                 The energy for transferring one task ($\alpha$) & $0.1$\\
                 Loss for transferring one energy unit ($\beta$) & $0.2$\\
                \bottomrule
            \end{tabular}
        \end{threeparttable}
    \end{center}
\end{table}

\pgfplotsset{compat=1.16}
\def\pgfplotsinvokeiflessthan#1#2#3#4{%
    \pgfkeysvalueof{/pgfplots/iflessthan/.@cmd}{#1}{#2}{#3}{#4}\pgfeov
}%
\def\pgfplotsmulticmpthree#1#2#3#4#5#6\do#7#8{%
    \pgfplotsset{float <}%
    \pgfplotsinvokeiflessthan{#1}{#4}{%
        #7%
    }{%
        \pgfplotsinvokeiflessthan{#4}{#1}{%
            #8%
        }{%
            \pgfplotsset{float <}%
            \pgfplotsinvokeiflessthan{#2}{#5}{%
                #7%
            }{%
                \pgfplotsinvokeiflessthan{#5}{#2}{%
                    #8%
                }{%
                    \pgfplotsset{float <}%
                    \pgfplotsinvokeiflessthan{#3}{#6}{%
                        #7%
                    }{%
                        #8%
                    }%
                }%
            }%
        }%
    }%
}%

\ifdefined\gconv
\else
\pgfmathsetmacro{\gconv}{0.1}
\fi

\pgfplotstableread[col sep=comma,header=true]{%
y,x,myvalue
0 ,  7 ,  27464.1
5 ,  7 ,  14408.1
10 ,  7 ,  9417.0
15 ,  7 ,  7024.5
20 ,  7 ,  6030.4
25 ,  7 ,  5610.0
30 ,  7 ,  5333.9
0 ,  8 ,  24431.0
5 ,  8 ,  13741.2
10 ,  8 ,  9063.1
15 ,  8 ,  6639.3
20 ,  8 ,  5808.4
25 ,  8 ,  5314.4
30 ,  8 ,  5094.4
0 ,  9 ,  22348.9
5 ,  9 ,  13279.9
10 ,  9 ,  8967.8
15 ,  9 ,  6499.8
20 ,  9 ,  5657.5
25 ,  9 ,  5182.6
30 ,  9 ,  4924.0
0 ,  10 ,  20266.8
5 ,  10 ,  12849.5
10 ,  10 ,  8885.2
15 ,  10 ,  6434.8
20 ,  10 ,  5536.5
25 ,  10 ,  5050.8
30 ,  10 ,  4829.8
0 ,  11 ,  19292.6
5 ,  11 ,  12518.0
10 ,  11 ,  8802.6
15 ,  11 ,  6402.3
20 ,  11 ,  5443.5
25 ,  11 ,  4975.0
30 ,  11 ,  4819.0
0 ,  12 ,  18799.8
5 ,  12 ,  12376.8
10 ,  12 ,  8720.0
15 ,  12 ,  6402.3
20 ,  12 ,  5378.5
25 ,  12 ,  4910.0
30 ,  12 ,  4808.2
0 ,  13 ,  18452.7
5 ,  13 ,  12268.8
10 ,  13 ,  8637.4
15 ,  13 ,  6402.3
20 ,  13 ,  5313.5
25 ,  13 ,  4877.5
30 ,  13 ,  4797.4
}{\datatable}
%
\pgfplotstablesort[create on use/sortkey/.style={
        create col/assign/.code={%
            \edef\entry{{\thisrow{x}}{\thisrow{y}}{\thisrow{myvalue}}}%
            \pgfkeyslet{/pgfplots/table/create col/next content}\entry
        }
    },
    sort key=sortkey,
    sort cmp={%
        iflessthan/.code args={#1#2#3#4}{%
            \edef\temp{#1#2}%
            \expandafter\pgfplotsmulticmpthree\temp\do{#3}{#4}%
        },
    },
    sort,
    columns/Mtx/.style={string type},
    columns/Kind/.style={string type},]\resulttable{\datatable}

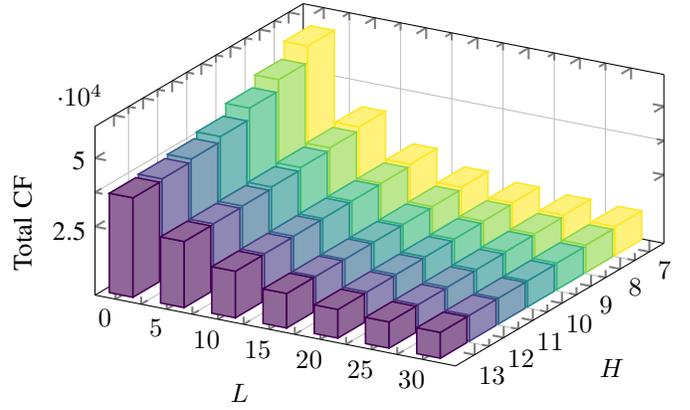
\begin{figure}[t]
	\begin{center}

\begin{tikzpicture}
\pgfplotsset{set layers}
\begin{axis}[
        view={120}{40},
        width=0.5 * \textwidth,
        height=0.35 * \textwidth,
        z buffer=none,
        xmin=6.5, xmax=13,
        ymin=-2, ymax=30,
        zmin=0, zmax=56000,
        enlargelimits=upper,
        ztick={25000, 50000},
        zticklabels={2.5, 5}, 
        xtick=data,
        extra tick style={grid=major},
        ytick=data,
        grid=minor,
        xlabel={$H$},
        ylabel={$L$},
        zlabel={Total CF},
        minor tick num=1,
        point meta=explicit,
        colormap name=viridis,
        scatter/use mapped color={
            draw=mapped color,fill=mapped color!60},
        execute at begin plot={}            
        ]
\path let \p1=($(axis cs:0,0,1)-(axis cs:0,0,0)$) in 
\pgfextra{\pgfmathsetmacro{\conv}{2*\y1}
\ifx\gconv\conv
\else
\xdef\gconv{\conv}
\typeout{Please\space recompile\space the\space file!}
\fi     
        };  
\path let \p1=($(axis cs:1,0,0)-(axis cs:0,0,0)$) in 
\pgfextra{\pgfmathsetmacro{\convx}{veclen(\x1,\y1)}
\typeout{One\space unit\space in\space x\space 
        direction\space is\space\convx pt}
        };                  
\path let \p1=($(axis cs:0,1,0)-(axis cs:0,0,0)$) in 
\pgfextra{\pgfmathsetmacro{\convy}{veclen(\x1,\y1)}
\typeout{One\space unit\space in\space y\space 
        direction\space is\space\convy pt}
        };                  
\addplot3 [visualization depends on={
\gconv*z \as \myz}, 
scatter/@pre marker code/.append style={/pgfplots/cube/size z=\myz},%
scatter/@pre marker code/.append style={/pgfplots/cube/size x=11.66135pt},%
scatter/@pre marker code/.append style={/pgfplots/cube/size y=9.10493pt},%
scatter,only marks,
mark=cube*,mark size=5,opacity=1]
 table[x expr={\thisrow{x}},y expr={\thisrow{y}},z
 expr={1*\thisrow{myvalue}},
 meta expr={-1*\thisrow{x}}
        ] \resulttable;
    \end{axis}
\makeatletter
\immediate\write\@mainaux{\xdef\string\gconv{\gconv}\relax}
\makeatother
\end{tikzpicture}
	\end{center}
	\caption{The 3D plot shows the total CF with respect to battery capacity $L$ and server capacity $H$.}\label{fig:result1}
\end{figure}

Fig. \ref{fig:result1} shows the performance results of our proposed scheme with respect to battery capacity $L$ and server capacity $H$. As we can see, there is a dramatic reduction of CF when the battery capacity goes from zero (i.e., no battery) to $5$ units. Thus the importance of battery (and BMS) to create an energy buffer for pursuing low CF is apparent. When $L > 5$, the total CF is hardly improved because such a capacity level can store all the energy required by the tasks. In addition, when the server capacity increases, the total CF decreases as more tasks can be offloaded to servers in low-CI regions.

We are curious about the impact of task offloading and energy sharing, respectively, hence we compare the performance of four schemes:
\begin{itemize}
    \item S1: This is the proposed scheme considering both task offloading and energy sharing via BMS in the network.
    \item S2: This scheme allows task offloading but the BMS is disabled, i.e., a battery can only provide energy to its local server.
    \item S3: This is the opposite to S2, namely energy sharing is enabled, but the tasks can not be offloaded among the servers.
    \item S4: In the last scheme, task offloading and energy sharing are both disabled; this corresponds to the most basic benchmark for comparison.
\end{itemize}
In the simulation, we obtain the performance of the four schemes by solving the corresponding minimum-cost flow problems. Figs. \ref{fig:result2} and \ref{fig:result3} show the performance results with respect to battery capacity $L$ and server capacity $H$, where we use w/ and w/o to represent with and without, respectively. Overall, compared with the conventional network (scheme S4), task offloading and energy sharing can help significantly reduce the total CF, up to $83.3\%$ by Fig.\ref{fig:result3} for $L = 5$ and $H = 16$.

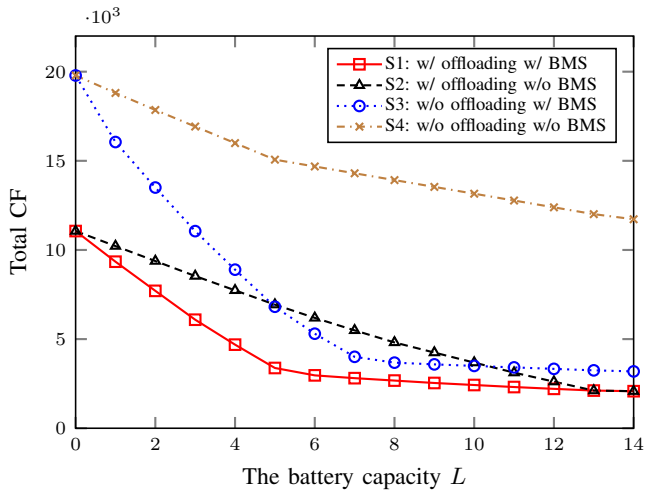
\begin{figure}
        \begin{center}
    		\begin{tikzpicture}
        		\begin{axis}[
        			scaled y ticks=base 10:-3,
        		    xlabel={The battery capacity $L$},
        		    ylabel={Total CF},
        		    xmin=0, xmax=14,
        		    ymin=0, ymax=22000,
        		    legend pos=north east,
        		    grid style=densely dashed,
        		    tick label style={font=\scriptsize},
        		    label style={font=\small},
        		    legend style={font=\scriptsize},
        		]
        		
            		\addplot[ color=red, mark=square, line width=0.8pt]     
            		coordinates { 
                    ( 0 , 11061.0 )
                    ( 1 , 9342.1 )
                    ( 2 , 7706.7 )
                    ( 3 , 6087.9 )
                    ( 4 , 4689.1 )
                    ( 5 , 3379.7 )
                    ( 6 , 2966.0 )
                    ( 7 , 2809.3 )
                    ( 8 , 2674.2 )
                    ( 9 , 2539.1 )
                    ( 10 , 2425.4 )
                    ( 11 , 2311.7 )
                    ( 12 , 2210.0 )
                    ( 13 , 2113.5 )
                    ( 14 , 2078.4 )
            		};

            		\addplot[ color=black, mark=triangle, densely dashed, mark options={solid}, line width=0.8pt]
            		coordinates {
                    ( 0 , 11061.0 )
                    ( 1 , 10218.3 )
                    ( 2 , 9378.0 )
                    ( 3 , 8537.7 )
                    ( 4 , 7735.5 )
                    ( 5 , 6933.3 )
                    ( 6 , 6190.4 )
                    ( 7 , 5490.7 )
                    ( 8 , 4812.6 )
                    ( 9 , 4245.1 )
                    ( 10 , 3682.4 )
                    ( 11 , 3119.7 )
                    ( 12 , 2614.0 )
                    ( 13 , 2113.5 )
                    ( 14 , 2078.4 )
            		};
            		
            		\addplot[ color=blue, mark=o, dotted, mark options={solid}, line width=0.8pt]
            		coordinates {
                    ( 0 , 19787.0 )
                    ( 1 , 16053.6 )
                    ( 2 , 13501.4 )
                    ( 3 , 11056.8 )
                    ( 4 , 8895.2 )
                    ( 5 , 6815.0 )
                    ( 6 , 5303.2 )
                    ( 7 , 4006.2 )
                    ( 8 , 3679.6 )
                    ( 9 , 3582.6 )
                    ( 10 , 3496.2 )
                    ( 11 , 3413.2 )
                    ( 12 , 3330.2 )
                    ( 13 , 3247.2 )
                    ( 14 , 3188.2 )
            		};

            		\addplot[ color=brown, mark=x, dash dot, mark options={solid}, line width=0.8pt]
            		coordinates {
                    ( 0 , 19787.0 )
                    ( 1 , 18807.0 )
                    ( 2 , 17852.0 )
                    ( 3 , 16923.0 )
                    ( 4 , 15994.0 )
                    ( 5 , 15065.0 )
                    ( 6 , 14683.0 )
                    ( 7 , 14301.0 )
                    ( 8 , 13919.0 )
                    ( 9 , 13537.0 )
                    ( 10 , 13155.0 )
                    ( 11 , 12773.0 )
                    ( 12 , 12391.0 )
                    ( 13 , 12009.0 )
                    ( 14 , 11723.0 )
            		};

            		\legend{S1: w/ offloading w/ BMS, S2: w/ offloading w/o BMS, S3: w/o offloading w/ BMS, S4: w/o offloading w/o BMS}
        		
        		\end{axis}
    		\end{tikzpicture}
    		
        \end{center}
    \caption{The total CF as function of the battery capacity $L$ with $H = 10$.}\label{fig:result2}
\end{figure}

\begin{figure}
        \begin{center}
    		\begin{tikzpicture}
        		\begin{axis}[
        			scaled y ticks=base 10:-3,
        		    xlabel={The server capacity $H$},
        		    ylabel={Total CF},
        		    xmin=8, xmax=17,
        		    legend style={at={(0.43, 0.75)},anchor=west},
        		    grid style=densely dashed,
        		    tick label style={font=\scriptsize},
        		    label style={font=\small},
        		    legend style={font=\scriptsize},
        		]
        		
            		\addplot[ color=red, mark=square, line width=0.8pt]     
            		coordinates { 
                    ( 8 , 4949.3 )
                    ( 9 , 4163.9 )
                    ( 10 , 3379.7 )
                    ( 11 , 3005.4 )
                    ( 12 , 2878.1 )
                    ( 13 , 2767.4 )
                    ( 14 , 2656.7 )
                    ( 15 , 2564.0 )
                    ( 16 , 2528.3 )
                    ( 17 , 2521.1 )
            		};
            		\addplot[ color=black, mark=triangle, densely dashed, mark options={solid}, line width=0.8pt]
            		coordinates {
                    ( 8 , 9506.6 )
                    ( 9 , 8014.9 )
                    ( 10 , 6933.3 )
                    ( 11 , 5927.9 )
                    ( 12 , 5033.1 )
                    ( 13 , 4248.9 )
                    ( 14 , 3464.7 )
                    ( 15 , 2766.0 )
                    ( 16 , 2528.3 )
                    ( 17 , 2521.1 )
            		};
            		
            		\addplot[ color=blue, mark=o, dotted, mark options={solid}, line width=0.8pt]
            		coordinates {
                    ( 8 , 6818.0 )
                    ( 9 , 6816.0 )
                    ( 10 , 6815.0 )
                    ( 11 , 6815.0 )
                    ( 12 , 6815.0 )
                    ( 13 , 6815.0 )
                    ( 14 , 6815.0 )
                    ( 15 , 6815.0 )
                    ( 16 , 6815.0 )
                    ( 17 , 6815.0 )
            		};
            		
            		\addplot[ color=brown, mark=x, dashed, mark options={solid}, line width=0.8pt]
            		coordinates {
                    ( 8 , 15091.0 )
                    ( 9 , 15065.0 )
                    ( 10 , 15065.0 )
                    ( 11 , 15065.0 )
                    ( 12 , 15065.0 )
                    ( 13 , 15065.0 )
                    ( 14 , 15065.0 )
                    ( 15 , 15065.0 )
                    ( 16 , 15065.0 )
                    ( 17 , 15065.0 )
            		};

            		\legend{S1: w/ offloading w/ BMS, S2: w/ offloading w/o BMS, S3: w/o offloading w/ BMS, S4: w/o offloading w/o BMS}
        		
        		\end{axis}
    		\end{tikzpicture}
    		
        \end{center}
    \caption{The total CF as function of the server capacity $H$ with $L = 10$.}\label{fig:result3}
\end{figure}
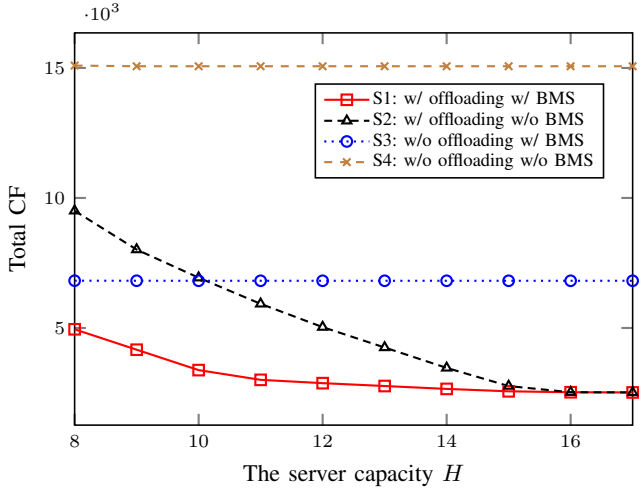

In Fig. \ref{fig:result2}, the comparison between S1 and S2 reveals that, when the battery capacity is either very low or very high, energy sharing is of little use. For the former case, it is because there is very little surplus energy available for sharing. For the latter case, the tasks can be offloaded to servers for which the local batteries are charged with low-CI energy. From the results of S1 and S2 in Fig. \ref{fig:result3}, energy sharing is not crucial when the server capacity is large. With high-capacitied servers, the tasks can be offloaded as much as possible to low-CI regions such that we do not need energy sharing for lower CF. 

In addition, the curves of S2 and S3 in Fig. \ref{fig:result2} show that task offloading and energy sharing almost give the same effect if the battery capacity is high. This is expected, as the operations of offloading a task to a low-CI region and obtaining some energy from that region achieve the same effect. Thus, joint optimization of offloading and energy sharing is of significance mainly if the server and battery capacities are moderate with respect to the number of tasks.

The performance of S3 and S4 in Fig. \ref{fig:result3} hardly change with increasing server capacity $H$, and this observation shows that optimizing only task scheduling in the time dimension without task offloading leads to a limited reduction of CF. The reason is that the difference of CI in the temporal dimension is typically not so large as that in the spatial dimension.


\section{Conclusion} \label{Sec:conclusion}
We have considered a CF minimization problem for edge computing by joint task offloading and energy sharing. We have formulated the problem as an integer linear programming model. Making the use of the structure of the problem, we have proposed a graph-based reformulation that enables to solve the problem to global optimum in polynomial time. The numerical results demonstrate the large potential of the use of spatial and temporal information of CI for reducing the CF, and joint task offloading and energy sharing does help significantly. Moreover, we can achieve the potential by either improving the energy storage capacity or computing capability. Future work includes, among other things, the extension of non-uniform energy consumption of tasks.

\bibliographystyle{IEEEtran}
\bibliography{mybibtex}

\end{document}